\DeclareMathOperator*\Fin{Fin}
\DeclareMathOperator*\ran{ran}
\newtheorem{theorem}{Theorem}
\newtheorem{lemma}{Lemma}
\newtheorem{proposition}{Proposition}
\newtheorem{corollary}{Corollary}
\theoremstyle{definition}
\newtheorem{definition}{Definition}
\newtheorem{example}{Example}
\newcommand{\effects}{\mathcal E(\mathcal H)}
\begin{document}
\title{Extensions of witness mappings}
\author{Gejza Jen\v ca}
\address{
Department of Mathematics and Descriptive Geometry\\
Faculty of Civil Engineering\\
Slovak Technical University\\
Radlinsk\' eho 11\\
	Bratislava 813 68\\
	Slovak Republic
}
\email{gejza.jenca@stuba.sk}
\thanks{
This research is supported by grant VEGA G-1/0080/10 of M\v S SR,
Slovakia and by the Slovak Research and Development Agency under the contract
No. APVV-0071-06.
}
\subjclass{Primary: 03G12, Secondary: 06F20, 81P10} 
\keywords{effect algebra, coexistent observables, witness mappings} 
\maketitle
\begin{abstract}
We deal with the problem of coexistence in interval effect algebras
using the notion of a witness mapping.
Suppose that we are given an interval effect algebra $E$, 
a coexistent subset
$S$ of $E$, a witness mapping $\beta$ for $S$,  and an element $t\in
E\setminus S$. We study the question whether there is a witness mapping
$\beta_t$ for $S\cup\{t\}$ such that $\beta_t$ is an extension of $\beta$.
In the main result, we prove that such an extension exists if and only if there is a mapping
$e_t$ from finite subsets of $S$ to $E$ satisfying certain conditions.
The main result is then applied several times to prove 
claims of the type ``If $t$ has a such-and-such relationship to $S$
and $\beta$, then $\beta_t$ exists''.
\end{abstract}

\section{Introduction and motivation}

Let $E$ be an effect algebra. We say that a 
subset $S$ of $E$ is {\em coexistent} if and only if
there is a Boolean elgebra $B$ and a morphism of effect algebras
$\phi:B\to E$ (an {\em observable}) 
such that $S$ is a subset of the range of $E$. 

Every orthomodular lattice is an effect algebra.
It is obvious that a subset of an orthomodular lattice
is coexistent if and only if it is a subset of a block. 
This fact has a nice
generalization for lattice-ordered effect algebras: a subset of
a lattice ordered effect algebra is coexistent if and only if
it is a subset of a maximal MV-subalgebra \cite{Rie:AGoBfLEA,
Jen:BARGbMVEA}.

From the point of view of physics, the notion of coexistence
is well motivated by its application in mathematical
foundations of quantum mechanics, see for example \cite{Lud:FoQM} and
\cite{BusLahMit:TQToM}. From the purely mathematical
point of view, one can hope that the study of coexistent
subsets can shed at least some light at the enigmatic
structure of general effect algebras.

To deal with the notion of coexistence, in \cite{Jen:CiIEA} we introduced and
studied a new notion called {\em witness mapping}. For a subset $S$ of an
interval effect algebra $E$ in a partially ordered abelian group $G$, a {\em
witness mapping for $S$} is a mapping from the finite subsets of $S$ to $E$
satisfying certain conditions.

The most important result about witness mappings is the following
theorem.
\begin{theorem} (Theorem 3 of \cite{Jen:CiIEA})
Let $E$ be an interval effect algebra. $S\subseteq E$ admits a witness
mapping if and only if $S$ is coexistent.
\end{theorem}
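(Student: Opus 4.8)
The plan is to establish the two implications separately. Coexistence is the global, structural condition; a witness mapping is its finite, combinatorial trace. The content of the theorem is that the finite data already determines the global object, so the proof will be an instance of a local-to-global argument carried out inside the ambient group $G$ in which $E$ sits as an interval $[0,u]$.

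For the forward direction --- a coexistent $S$ admits a witness mapping --- I would argue constructively. Fix an observable $\phi : B \to E$ with $S \subseteq \ran(\phi)$ and choose, for each $s \in S$, a preimage $\hat s \in B$. Given $F \in \Fin(S)$, the elements $\{\hat s : s \in F\}$ generate a finite Boolean subalgebra of $B$, and I would define $\beta(F)$ from the images under $\phi$ of the canonical elements of this subalgebra (for instance the meet $\bigwedge_{s \in F}\hat s$, or the full family of atoms, depending on the precise form the defining conditions take). Each axiom for a witness mapping then unwinds to a Boolean identity among these generators, which $\phi$ transports to $E$ because it is a morphism of effect algebras; I expect this verification to be routine and to present no genuine difficulty.

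The substance of the theorem lies in the converse. From a witness mapping $\beta$ I must reconstruct a Boolean algebra and an observable. My strategy is to work over the directed poset $\Fin(S)$: for each finite $F$ I would recover, by an inclusion--exclusion inversion of the values of $\beta$ on the subsets of $F$, a family of ``atom values'' in $G$; these should determine a finite Boolean algebra $B_F$ and a finitely additive $G$-valued assignment $\phi_F : B_F \to G$ whose range contains $F$. The defining conditions on $\beta$ are exactly what forces these finite fragments to be coherent, so that $F \subseteq F'$ yields an embedding $B_F \hookrightarrow B_{F'}$ under which $\phi_{F'}$ restricts to $\phi_F$. Passing to the colimit $B = \varinjlim_{F} B_F$ then glues the $\phi_F$ into a single map $\phi : B \to G$ with $S \subseteq \ran(\phi)$.

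The main obstacle is non-negativity. The inclusion--exclusion inversion produces atom values in the group $G$, but for $\phi$ to be an observable these values must land in $E = [0,u]$ rather than merely in $G$; equivalently, $\phi$ must be a morphism of effect algebras, not just a finitely additive $G$-measure. This is the one place where the full strength of the witness-mapping axioms (rather than those of an arbitrary map $\Fin(S) \to E$) must be spent, and I would expect the conditions to have been formulated precisely so that the inverted values are forced to be positive and bounded by $u$. Once that is secured the colimit step is formal: every element of $B$ and every $\oplus$-relation among finitely many of its elements already lives in some $B_F$, so the morphism property of each $\phi_F$ propagates to $\phi$.
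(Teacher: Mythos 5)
This theorem is not proved in the present paper; it is imported verbatim as Theorem~3 of \cite{Jen:CiIEA}, whose proof is constructive in both directions and follows exactly the route you sketch: forward, $\beta(F):=\phi\bigl(\bigwedge_{s\in F}\hat s\bigr)$, with (A3) holding because $D_\beta(X,A)$ becomes $\phi$ of the Boolean atom $\bigwedge_{s\in X}\hat s\wedge\bigwedge_{s\in A\setminus X}\hat s'$; backward, the finite Boolean fragments with atoms $D_\beta(X,A)$, glued along $\Fin(S)$. Your ``main obstacle'' is in fact dissolved by the definitions: condition (A3) literally \emph{is} the statement that the M\"obius-inverted atom values $D_\beta(X,A)$ are nonnegative, and the bound by $u$ comes from the telescoping identity (Lemma~\ref{lemma:formal} iterated, i.e.\ Lemma~3 of \cite{Jen:CiIEA}) which shows that $\{D_\beta(X,A):X\subseteq A\}$ is a decomposition of unit; the same identity supplies the coherence of the maps $B_F\to B_{F'}$ needed for the colimit. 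So the proposal is correct in outline and essentially identical in approach to the cited proof, with the caveat that the steps you flag as expected or routine are precisely the ones the reference carries out in detail.
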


That means that the existence of a witness mapping for $S$ is equivalent to the
existence of an observable such that $S$ is a subset of its range. However, the
definition of a witness mappings is given purely in terms of interval effect
algebras.  Moreover, despite of the fact that the proof of Theorem 1 in
\cite{Jen:CiIEA} is constructive in both directions, the relationship between
observables and witness maps is far from an one-to-one correspondence: in
general, an observable gives a rise to many different witness maps (see the
proof of Proposition 6 of \cite{Jen:CiIEA} in the context of Corollaries 3 and
4 of that paper). 

These facts give us the hope that using witness mappings we can deal
with questions concerning coexistence without explicitly dealing with
observables.

In the present paper, we continue our study of witness maps in a natural
direction. Suppose that we are given an interval effect algebra $E$, 
a coexistent subset
$S$ of $E$ a witness mapping $\beta:\Fin(S)\to E$ and an element $t\in
E\setminus S$. We want to know whether is it possible to find a witness mapping
$\beta_t:\Fin(S\cup\{t\})\to E$ such that $\beta_t$ is an extension of $\beta$.
The question of the existence of such an extension is settled by Theorem
\ref{thm:main}. We prove that $\beta_t$ exists if and only if there is a
mapping $e_t:\Fin(S)\to E$ with a certain relationship to $\beta$.  In the
remaining part of this paper, Theorem \ref{thm:main} is applied several times
to prove claims of the type ``If $t$ has a such-and-such relationship to $S$
and $\beta$, then $\beta_t$ exists''.

\section{Definitions and basic relationships}
\subsection{Effect algebras}

An {\em effect algebra} 
is a partial algebra $(E;\oplus,0,1)$ with a binary 
partial operation $\oplus$ and two nullary operations $0,1$ satisfying
the following conditions.
\begin{enumerate}
\item[(E1)]If $a\oplus b$ is defined, then $b\oplus a$ is defined and
		$a\oplus b=b\oplus a$.
\item[(E2)]If $a\oplus b$ and $(a\oplus b)\oplus c$ are defined, then
		$b\oplus c$ and $a\oplus(b\oplus c)$ are defined and
		$(a\oplus b)\oplus c=a\oplus(b\oplus c)$.
\item[(E3)]For every $a\in E$ there is a unique $a'\in E$ such that
		$a\oplus a'$ exists and $a\oplus a'=1$.
\item[(E4)]If $a\oplus 1$ is defined, then $a=0$.
\end{enumerate}

Effect algebras were introduced by Foulis and Bennett in their paper 
\cite{FouBen:EAaUQL}.

In their paper~\cite{KopCho:DP}, Chovanec and K\^ opka introduced
an essentially equivalent structure called {\em D-poset}. Their definition
is an abstract algebraic version the {\em D-poset of fuzzy sets},
introduced by K\^ opka in the paper~\cite{Kop:DPFS}.

Another equivalent structure was introduced by Giuntini and
Greuling in~\cite{GiuGre:TaFLfUP}. We refer to~\cite{DvuPul:NTiQS} for more
information on effect algebras and related topics.

\subsection{Properties of effect algebras}

In an effect algebra $E$, we write $a\leq b$ if and only if there is
$c\in E$ such that $a\oplus c=b$.  It is easy to check that for every effect
algebra $E$, $\leq$ is a partial order on $E$.  Moreover, it is possible to introduce
a new partial operation $\ominus$; $b\ominus a$ is defined if and only if $a\leq
b$ and then $a\oplus(b\ominus a)=b$.  It can be proved that, in an effect
algebra, $a\oplus b$ is defined if and only if $a\leq b'$ if and only if $b\leq
a'$. In an effect algebra, we write $a\perp b$ if and only if $a\oplus b$ exists.

A finite family $(a_1,\dots,a_n)$ of elements of an effect algebra is called
{\em orthogonal} if and only if the sum $a_1\oplus\dots\oplus a_n$ exists. An orthogonal
family $(a_1,\dots,a_n)$ is a {\em decomposition of unit} if and only if $a_1\oplus\dots a_n=1$.

\subsection{Classes of effect algebras}

The class of effect algebras can be considered
a common superclass of several important classes of algebras: orthomodular lattices
\cite{Kal:OL,Ber:OLaAA}, orthoalgebras \cite{FouRan:OS1BC}, MV-algebras \cite{Cha:AAoMVL,Mun:IoAFCSAiLSC}. 

\begin{itemize}
\item An effect algebra $E$ is an {\em orthomodular lattice} if and only if $E$ is lattice-ordered
and, for all $a,b\in E$,  $a\perp b\implies a\wedge b=0$. 
\item An effect algebra $E$ is an {\em MV-effect algebra} if and only if $E$ is lattice-ordered
and, for all $a,b\in E$, $a\ominus (a\wedge b)=(a\vee b)\ominus b$. By \cite{ChoKop:BDP},
there is a natural, one-to-one correspondence between MV-algebras and MV-effect algebras.
\item An effect algebra $E$ is a {\em Boolean algebra} if and only if $E$ is an 
orthomodular lattice and $E$ is an MV-effect algebra. In this case, we wave
$a\perp b$ iff $a\wedge b=0$ and then $a\oplus b=a\vee b$.
\end{itemize}

\subsection{Observables and coexistent subsets}

Let $E,F$ be effect algebras. A mapping $\phi:E\to F$ is a {\em morphism of effect algebras}
if and only if the following conditions are satisfied:
\begin{enumerate}
\item[(EM1)]$\phi(1)=1$.
\item[(EM2)]If $a,b\in E$, $a\perp b$ then $\phi(a)\perp\phi(b)$ and $\phi(a\oplus b)=\phi(a)\oplus\phi(b)$.
\end{enumerate}
We note that every morphism of effect algebras is isotone. Moreover, every 
morphism of effect algebras preserves the $0$ element, as well as the unary operation $x\mapsto x'$ and the partial binary operation $\ominus$.

A bijective morphism of effect algebras $\phi:E\to F$ such that $\phi^{-1}$ is a
morphism of effect algebras is called {\em an isomorphism of effect algebras}.

Let $B$ be a Boolean algebra and let $E$ be an effect algebra. A morphism of
effect algebras $\alpha:B\to E$ is called {\em an observable}. If $B$ is
finite, then we say that $\alpha$ is a {\em a simple observable}.

\begin{definition}
We say that a subset $S$ of an effect algebra is {\em coexistent} if there exists
a Boolean algebra $B$ and an observable $\alpha:B\to E$ such that $S\subseteq\alpha(B)$.
\end{definition}

\subsection{Interval effect algebras}

Let $(G,\leq)$ be a partially
ordered abelian group and 
$u\in G$ be a positive element.
For $0\leq a,b\leq u$, define $a\oplus b$ if and only if
$a+b\leq u$ and put $a\oplus b=a+b$.  With such a partial operation $\oplus$, the
closed interval 
$$
[0,u]_G=\{x\in G:0\leq x\leq u\}
$$ 
becomes an effect algebra $([0,u]_G,\oplus,0,u)$.  Effect
algebras which arise from partially ordered abelian groups in this way are
called {\em interval effect algebras}, see \cite{BenFou:IaSEA}.

By \cite{Mun:IoAFCSAiLSC}, every MV-effect algebra is an MV-algebra.

\subsection{Standard effect algebras}

Let $\mathbb H$ be a Hilbert space, let $\mathcal B_{sa}(\mathbb H)$ be the
set of all bounded self-adjoint operators on $\mathbb H$. For 
$A,B\in\mathcal B_{sa}(\mathbb H)$, write $A\leq B$ if and only if,
for all $x\in\mathbb H$, $\langle Ax,x\rangle\leq\langle Bx,x\rangle$.
Then $(\mathcal B_{sa}(\mathbb H),+,0)$ is a partially ordered abelian group.
The identity operator $I$ is a positive element of this group.

The prototype interval effect algebra is 
the {\em standard effect algebra}
$\mathcal E(\mathbb H)=[0,I]_{\mathcal B_{sa}(\mathbb H)}$. 
$\mathcal E(\mathbb H)$ plays an important role in the unsharp observable 
approach to the foundations of quantum mechanics, see for example
\cite{BusLahMit:TQToM}.

\subsection{Witness mappings}

Let $E$ be an interval effect algebra in a partially ordered abelian group $G$.
Let $S\subseteq E$. Let us write $\Fin(S)$ for the set of all
finite subsets of $S$. We write $I(\Fin(S))$ for the set of all comparable
elements of the poset $(Fin(S),\subseteq)$, that means,
$$
I(\Fin(S))=\{(X,Y)\in\Fin(S)\times\Fin(S):X\subseteq Y\}.
$$

For every mapping $\beta:\Fin(S)\to G$, we define a
mapping $D_\beta:I(\Fin(S))\to G$.
For $(X,A)\in I(\Fin(S))$,
the value
$D_\beta(X,A)\in G$ is given by the rule
$$
D_\beta(X,A):=\sum_{X\subseteq Z\subseteq A}(-1)^{|X|+|Z|}\beta(Z).
$$
The transform $\beta\mapsto D_\beta$ is (essentially) a M\"obius inversion
with respect to the poset $(\Fin(S),\subseteq)$; see \cite{Jen:CiIEA} for
details. When dealing with $D_\beta$, the following lemma is crucial.
\begin{lemma}(Lemma 1 of \cite{Jen:CiIEA})
\label{lemma:formal}
Let $E$ be an interval effect algebra in a partially ordered abelian group $G$.
Let $S$ be a subset of $E$, let $\beta:\Fin(S)\to G$.
For all $c\in S\setminus A$,
$$
D_\beta(X,A)=D_\beta(X,A\cup\{c\})+ D_\beta(X\cup\{c\},A\cup\{c\}).
$$
\end{lemma}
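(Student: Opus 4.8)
The plan is to prove the identity by direct computation, splitting each sum according to whether the running index set $Z$ contains the newly adjoined element $c$. First I would record the elementary constraints implicit in the statement. Since $(X,A)\in I(\Fin(S))$ we have $X\subseteq A$, and since $c\in S\setminus A$ we have $c\notin A$; consequently $c\notin X$ as well. This last observation is exactly what makes the reindexing below uniform, because adjoining $c$ to any set not already containing it raises the cardinality by precisely one, and hence flips the sign $(-1)^{|Z|}$.

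Next I would expand $D_\beta(X,A\cup\{c\})$ by partitioning its index range $\{Z:X\subseteq Z\subseteq A\cup\{c\}\}$ into the sets with $c\notin Z$ and those with $c\in Z$. A set $Z$ with $c\notin Z$ automatically satisfies $X\subseteq Z\subseteq A$, so these terms reassemble into $D_\beta(X,A)$ verbatim. For a set $Z$ with $c\in Z$, I would substitute $Z=Z'\cup\{c\}$ with $Z'=Z\setminus\{c\}$; the condition becomes $X\subseteq Z'\subseteq A$, and $|Z|=|Z'|+1$, so these terms contribute $\sum_{X\subseteq Z'\subseteq A}(-1)^{|X|+|Z'|+1}\beta(Z'\cup\{c\})$. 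Thus $D_\beta(X,A\cup\{c\})=D_\beta(X,A)+\sum_{X\subseteq Z'\subseteq A}(-1)^{|X|+|Z'|+1}\beta(Z'\cup\{c\})$.

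Then I would expand the remaining term $D_\beta(X\cup\{c\},A\cup\{c\})$. Here every index set $Z$ satisfies $X\cup\{c\}\subseteq Z$, so $c\in Z$ without exception; writing $Z=Z'\cup\{c\}$ once more, the inclusion condition reduces to $X\subseteq Z'\subseteq A$, while the two cardinality shifts $|X\cup\{c\}|=|X|+1$ and $|Z|=|Z'|+1$ combine to leave the sign exponent $|X\cup\{c\}|+|Z|$ equal to $|X|+|Z'|$ (the two added units cancel modulo $2$). Hence $D_\beta(X\cup\{c\},A\cup\{c\})=\sum_{X\subseteq Z'\subseteq A}(-1)^{|X|+|Z'|}\beta(Z'\cup\{c\})$.

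Finally I would add the two expansions. The two correction sums are indexed by the same sets $Z'$ and multiply the identical quantities $\beta(Z'\cup\{c\})$, but with opposite signs $(-1)^{|X|+|Z'|+1}$ and $(-1)^{|X|+|Z'|}$, so they cancel termwise and leave exactly $D_\beta(X,A)$, as required. The argument is in essence a verification of the defining recursion of Möbius inversion on $(\Fin(S),\subseteq)$, so I do not expect any genuine obstacle; the only points demanding care are the bookkeeping of the inclusion conditions after each substitution and the confirmation that $c\notin X$, which is what guarantees $|Z'\cup\{c\}|=|Z'|+1$ uniformly and thereby produces the clean sign flip driving the cancellation.
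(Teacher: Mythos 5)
Your proof is correct and complete: splitting the index range of $D_\beta(X,A\cup\{c\})$ according to whether $c\in Z$, reindexing via $Z=Z'\cup\{c\}$, and observing the termwise cancellation against $D_\beta(X\cup\{c\},A\cup\{c\})$ is exactly the standard verification of this M\"obius-type recursion, and your sign bookkeeping (in particular the use of $c\notin X$ to guarantee $|Z'\cup\{c\}|=|Z'|+1$) is accurate. The present paper does not prove the lemma at all --- it only cites Lemma 1 of \cite{Jen:CiIEA} --- so there is no internal proof to compare against, but your argument is the natural one and there is nothing to add.
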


In \cite{Jen:CiIEA}, we introduced and studied the following notion:
\begin{definition}\label{def:cm}
Let $E$ be an interval effect algebra.
We say that a mapping $\beta:\Fin(S)\to E$ is a {\em witness
mapping for $S$} if and only if the following conditions are satisfied.
\begin{enumerate}
\item[(A1)]$\beta(\emptyset)=1$,
\item[(A2)]for all $c\in S$, $\beta(\{c\})=c$,
\item[(A3)]for all $(X,A)\in I(\Fin(S))$, $D_\beta(X,A)\geq 0$.
\end{enumerate}
\end{definition}

The most important result concerning witness mappings is the following
theorem.
\begin{theorem} (\cite{Jen:CiIEA}, Theorem 3)
Let $E$ be an interval effect algebra. $S\subseteq E$ admits a witness
mapping if and only if $S$ is coexistent.
\end{theorem}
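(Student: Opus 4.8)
The plan is to prove both implications by translating between the Boolean ``cells'' determined by an observable and the Möbius-transformed quantities $D_\beta(X,A)$, which Lemma~\ref{lemma:formal} is precisely designed to control.

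For the direction ``coexistent $\Rightarrow$ witness mapping'', suppose $\alpha\colon B\to E$ is an observable with $S\subseteq\alpha(B)$, and for each $s\in S$ choose $b_s\in B$ with $\alpha(b_s)=s$. I would define $\beta\colon\Fin(S)\to E$ by $\beta(X)=\alpha\bigl(\bigwedge_{s\in X}b_s\bigr)$, so that (A1) holds because the empty meet is $1_B$ and (A2) holds by the choice of the $b_s$. The point of (A3) is that, inside the Boolean algebra $B$, the cell $c_{X,A}=\bigwedge_{s\in X}b_s\wedge\bigwedge_{s\in A\setminus X}b_s'$ satisfies, for every $Z\subseteq A$, the orthogonal decomposition $\bigwedge_{s\in Z}b_s=\bigoplus_{Z\subseteq W\subseteq A}c_{W,A}$ into pairwise disjoint cells. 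Applying the morphism $\alpha$ turns this orthogonal join into a sum in $G$, giving $\beta(Z)=\sum_{Z\subseteq W\subseteq A}\alpha(c_{W,A})$; Möbius inversion over the subset lattice of $A$ then yields $D_\beta(X,A)=\alpha(c_{X,A})\ge 0$, which is exactly (A3). I note that this argument never needs $\alpha$ to extend to a group homomorphism: the positivity is produced entirely from orthogonality inside $B$.

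For the converse, ``witness mapping $\Rightarrow$ coexistent'', I would first treat a finite set $F\subseteq S$. Let $B_F$ be the free Boolean algebra on $F$, whose $2^{|F|}$ atoms I index by the subsets $X\subseteq F$, and define $\alpha_F$ on atoms by $\alpha_F(a_X)=D_\beta(X,F)$, extended additively to all of $B_F$. Condition (A3) makes each value nonnegative, and the computation $\sum_{X\subseteq F}D_\beta(X,F)=\beta(\emptyset)=1$ (the inner alternating sums annihilate every $Z\neq\emptyset$) shows that these values form a decomposition of unit; hence every partial sum lies in $E$ and the joins are genuinely orthogonal, so $\alpha_F$ is an observable. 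That $F\subseteq\alpha_F(B_F)$ follows from the zeta-transform identity $\beta(\{s\})=\sum_{s\in Z\subseteq F}D_\beta(Z,F)$, i.e.\ $s=\alpha_F\bigl(\bigvee_{X\ni s}a_X\bigr)$.

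The main work, and the step I expect to be the principal obstacle, is gluing the $\alpha_F$ into a single observable on the free Boolean algebra $B$ on all of $S$. For $F\subseteq F'$ the free construction gives an inclusion $B_F\hookrightarrow B_{F'}$ under which the atom $a_X$ of $B_F$ becomes $\bigvee\{a_Y:Y\subseteq F',\ Y\cap F=X\}$, so compatibility $\alpha_{F'}|_{B_F}=\alpha_F$ amounts to the identity $D_\beta(X,F)=\sum_{Y\subseteq F',\,Y\cap F=X}D_\beta(Y,F')$. I would prove this by induction on $|F'\setminus F|$, the one-element step being precisely Lemma~\ref{lemma:formal}. Since $B=\bigcup_F B_F$ and any orthogonal pair of elements of $B$ already lives in a common $B_F$, the compatible family $\{\alpha_F\}$ assembles into a well-defined effect-algebra morphism $\alpha\colon B\to E$; as every $s\in S$ lies in some $B_F$ with $s\in\alpha_F(B_F)$, we obtain $S\subseteq\alpha(B)$, so $S$ is coexistent. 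The delicate points to check carefully are the well-definedness of this limit morphism and that it genuinely preserves $\oplus$, both of which reduce to the finite case via the directedness of $\Fin(S)$ under inclusion.
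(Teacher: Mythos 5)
This theorem is not proved in the present paper at all: it is quoted verbatim from \cite{Jen:CiIEA} (Theorem 3 there), so there is no in-paper proof to compare against. Judged on its own, your argument is correct and, as far as one can reconstruct from the surrounding remarks, it follows essentially the same constructive route as the cited source: the forward direction via a choice of preimages $b_s\in\alpha^{-1}(s)$ (the very choice the introduction points to when it notes that one observable yields many witness mappings), the cell decomposition $c_{X,A}$ in $B$, and M\"obius inversion giving $D_\beta(X,A)=\alpha(c_{X,A})\geq 0$; the converse by reading the nonnegative family $\bigl(D_\beta(X,F)\bigr)_{X\subseteq F}$ as a decomposition of unit defining an observable on the finite free Boolean algebra $B_F$, with the compatibility identity $D_\beta(X,F)=\sum_{Y\cap F=X}D_\beta(Y,F')$ following by induction from Lemma~\ref{lemma:formal} and the directed union $B=\bigcup_F B_F$ assembling the $\alpha_F$ into a single observable. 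All the computations you cite (the telescoping $\sum_{X\subseteq F}D_\beta(X,F)=\beta(\emptyset)=1$, the zeta-transform identity recovering $\beta(\{s\})$, and the fact that partial sums of a positive decomposition of unit stay in $[0,u]_G$) check out, so I see no gap.
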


There are at least two important examples of witness mappings:
\begin{example} (Corollary 2 of \cite{Jen:CiIEA})
Let $M$ be an MV-effect algebra. The mapping $\bigwedge:\Fin(M)\to M$ is
a witness mapping.
\end{example}
\begin{example} (Proposition 9 of \cite{Jen:CiIEA})
Let $S$ be a pairwise commuting subset of $\mathcal E(\mathbb H)$.

The mapping $\Pi:\Fin(S)\to \mathcal E(\mathbb H)$ 
given by
$$
\Pi(\{x_1,\dots,x_n\})=x_1.\dots.x_n.
$$
is a witness mapping.
\end{example}
Several properties of the witness mappings $\Pi$ and $\bigwedge$ generalize
nicely to all witness mappings, as the following proposition shows.
\begin{proposition} (Propositions 3 and 5 of \cite{Jen:CiIEA})
\begin{enumerate}
\item[(a)]
For all $(X,A)\in I(\Fin(S))$, $D_\beta(X,A)\leq 1$.
\item[(b)]
$\beta$ is an antitone mapping from $(\Fin(S),\subseteq)$ to $(E,\leq)$.
\item[(c)]
For all $X\in\Fin(S)$, $\beta(X)$ is a lower bound of $X$.
\item[(d)]
Suppose that $0\in S$. If $0\in X\in\Fin(S)$, then $\beta(X)=0$.
\item[(e)]
Suppose that $1\in S$.
For all $X\in\Fin(S)$, $\beta(X)=\beta(X\cup\{1\})$
\end{enumerate}
\end{proposition}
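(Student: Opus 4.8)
The plan is to base everything on a single auxiliary identity obtained by iterating Lemma~\ref{lemma:formal}, namely that for every $(Y,A)\in I(\Fin(S))$ one has $\sum_{Y\subseteq Z\subseteq A}D_\beta(Z,A)=\beta(Y)$ (this is just the inverse of the M\"obius transform $\beta\mapsto D_\beta$, but I would derive it directly). I would prove it by induction on $|A\setminus Y|$. The base case $A=Y$ is immediate, since $D_\beta(Y,Y)=(-1)^{2|Y|}\beta(Y)=\beta(Y)$. For the inductive step I would pick $c\in A\setminus Y$, set $A'=A\setminus\{c\}$, split the index sets $Z$ according to whether they contain $c$, and pair each $Z'\not\ni c$ with $Z'\cup\{c\}$; Lemma~\ref{lemma:formal} collapses each pair via $D_\beta(Z',A)+D_\beta(Z'\cup\{c\},A)=D_\beta(Z',A')$, reducing the sum to $\sum_{Y\subseteq Z'\subseteq A'}D_\beta(Z',A')$, which equals $\beta(Y)$ by the induction hypothesis.

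With this identity in hand, (a) is immediate: taking $Y=\emptyset$ gives $\sum_{Z\subseteq A}D_\beta(Z,A)=\beta(\emptyset)=1$ by (A1), and since every summand is $\geq 0$ by (A3), each one, in particular $D_\beta(X,A)$, is at most $1$. For (b) I would not need the full identity but only a single application of Lemma~\ref{lemma:formal} with $A=X$: for $c\notin X$ it gives $\beta(X)=D_\beta(X,X)=D_\beta(X,X\cup\{c\})+\beta(X\cup\{c\})$, so $\beta(X)\geq\beta(X\cup\{c\})$ by (A3); iterating over the elements of $Y\setminus X$ yields $\beta(X)\geq\beta(Y)$ whenever $X\subseteq Y$. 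Here I would note that the order of $E$ coincides with the order inherited from $G$, so this inequality is exactly antitonicity into $(E,\leq)$. Statement (c) then follows by applying (b) to $\{x\}\subseteq X$ together with (A2), giving $\beta(X)\leq\beta(\{x\})=x$ for each $x\in X$; and (d) is a one-line consequence of (c), since $0\in X$ forces $\beta(X)\leq 0$ while $\beta(X)\in E$ forces $\beta(X)\geq 0$.

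The part that genuinely needs the identity in full strength, and which I expect to be the main obstacle, is (e). Assuming $1\in S$ and $1\notin X$ (the case $1\in X$ being trivial), I would apply the identity twice with $A\ni 1$: once with $Y=\emptyset$, giving $\sum_{Z\subseteq A}D_\beta(Z,A)=1$, and once with $Y=\{1\}$, giving $\sum_{1\in Z\subseteq A}D_\beta(Z,A)=\beta(\{1\})=1$ by (A2). Subtracting, the sum of $D_\beta(Z,A)$ over all $Z\subseteq A$ with $1\notin Z$ is $0$; since each such term is $\geq 0$ by (A3), every one of them vanishes. Choosing $A=X\cup\{1\}$ and $Z=X$ then gives $D_\beta(X,X\cup\{1\})=0$, and because $D_\beta(X,X\cup\{1\})=\beta(X)-\beta(X\cup\{1\})$ straight from the definition, this is precisely $\beta(X)=\beta(X\cup\{1\})$. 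The point I would be careful about throughout is that all these computations live in $G$, and that it is the nonnegativity of the individual $D_\beta$ values, supplied by (A3), that converts the additive identities into the desired inequalities in (a)--(b) and into the vanishing used in (e).
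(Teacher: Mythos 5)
Your proof is correct, and it follows essentially the same route as the source: the paper itself gives no proof here (it simply cites Propositions 3 and 5 of \cite{Jen:CiIEA}), and your key identity $\sum_{Y\subseteq Z\subseteq A}D_\beta(Z,A)=\beta(Y)$ is exactly the specialization (take $X:=Y$, $A:=Y$, $C:=A\setminus Y$) of the summation formula quoted later in this paper as Lemma 3 of \cite{Jen:CiIEA}, which is the tool the original proofs rest on. All the individual steps — the induction via Lemma~\ref{lemma:formal}, the positivity argument forcing each summand of a zero sum of nonnegative elements in a po-group to vanish, and the identification of the effect-algebra order with the group order on $[0,u]_G$ — check out.
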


\section{Extensions of witness mappings in interval effect algebras}

Let $E$ be an interval effect algebra, let $S\subseteq E$ and let
$\beta$ be a witness mapping for $S$. We call the pair $(\beta,S)$
a {\em witness pair in $E$}. Suppose that there is another witness pair 
$(\beta^+,S^+)$ such that $S^+\supseteq S$ and 
$\beta^+$ restricted to $\Fin(S)$ is equal to $\beta$. We then
say that $(\beta^+,S^+)$ {\em extends} $(\beta,S)$, 
in symbols $(\beta^+,S^+)\sqsupseteq (\beta,S)$. If is easy to see
that $\sqsupseteq$ is a partial order on the set of all witness pairs.
By a standard use of Zorn lemma, it is easy to check that above every
witness there is a maximal witness pair.
If $t\in E\setminus S$ and there is a witness pair 
$(\beta_t,S\cup\{t\})$ that extends $(\beta,S)$,
then we say that {\em $(\beta,S)$ can be extended by $t$}.

In the remainder of this section, $E$ is an interval effect algebra
and $(\beta,S)$ is a witness pair in $E$.

\begin{theorem}\label{thm:main}
Let $t\in E\setminus S$.
The following are equivalent:
\begin{enumerate}
\item[(a)]
$(\beta,S)$ can be extended by $t$.
\item[(b)]
There is a mapping $e_t:\Fin(S)\to E$ such that
$e_t(\emptyset)=t$ and $0\leq D_{e_t}\leq D_\beta$.
\end{enumerate}
\end{theorem}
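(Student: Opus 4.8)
The plan is to establish the equivalence by directly connecting the new witness mapping $\beta_t$ to the auxiliary mapping $e_t$. The key observation is that any mapping $\beta_t:\Fin(S\cup\{t\})\to E$ extending $\beta$ is completely determined by its values on those finite subsets of $S\cup\{t\}$ that actually contain $t$, since on $\Fin(S)$ it must agree with $\beta$. A set containing $t$ has the form $X\cup\{t\}$ with $X\in\Fin(S)$, so I would define $e_t:\Fin(S)\to E$ by $e_t(X):=\beta_t(X\cup\{t\})$. Note that (A1) for $\beta_t$ forces no constraint, but (A2) applied to the element $t$ gives $\beta_t(\{t\})=t$, i.e.\ $e_t(\emptyset)=t$, which is exactly the first condition in (b). This dictionary $e_t\leftrightarrow\beta_t$ is the backbone of both directions.

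For the direction (a)$\Rightarrow$(b), I would start from a witness pair $(\beta_t,S\cup\{t\})$ extending $(\beta,S)$, define $e_t$ as above, and verify the two inequalities $0\leq D_{e_t}$ and $D_{e_t}\leq D_\beta$. The lower bound $0\leq D_{e_t}(X,A)$ should follow from (A3) for $\beta_t$: I would argue that $D_{e_t}(X,A)$ equals $D_{\beta_t}(X\cup\{t\},A\cup\{t\})$, since inserting $t$ into both arguments of the M\"obius difference reindexes the sum $\sum_{X\subseteq Z\subseteq A}(-1)^{|X|+|Z|}\beta_t(Z\cup\{t\})$ by the bijection $Z\mapsto Z\cup\{t\}$ (the sign is preserved because both $|X|$ and $|Z|$ increase by one). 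The upper bound is where Lemma~\ref{lemma:formal} does the real work: applying it with $c=t$ to the pair $(X,A\cup\{t\})$ gives
$$
D_{\beta_t}(X,A)=D_{\beta_t}(X,A\cup\{t\})+D_{\beta_t}(X\cup\{t\},A\cup\{t\}),
$$
and since $\beta_t$ restricts to $\beta$ on $\Fin(S)$ we have $D_{\beta_t}(X,A)=D_\beta(X,A)$. Rearranging and using the identifications above yields $D_\beta(X,A)-D_{e_t}(X,A)=D_{\beta_t}(X,A\cup\{t\})\geq 0$, which is precisely $D_{e_t}\leq D_\beta$.

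For the converse (b)$\Rightarrow$(a), I would \emph{define} $\beta_t$ by $\beta_t\restriction\Fin(S)=\beta$ and $\beta_t(X\cup\{t\}):=e_t(X)$ for $X\in\Fin(S)$, and then check (A1)--(A3). Conditions (A1) and (A2) are immediate: (A1) is inherited from $\beta$, and (A2) holds for $c\in S$ by assumption on $\beta$ and for $c=t$ because $\beta_t(\{t\})=e_t(\emptyset)=t$. The substantial step is (A3), requiring $D_{\beta_t}(X,A)\geq 0$ for every $(X,A)\in I(\Fin(S\cup\{t\}))$. I would split into cases according to whether $t\in X$, $t\in A\setminus X$, or $t\notin A$. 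When $t\notin A$ the value equals $D_\beta(X,A)\geq 0$; when $t\in X$ (hence $t\in A$) it reduces via the reindexing to $D_{e_t}(X\setminus\{t\},A\setminus\{t\})\geq 0$; and the mixed case $t\in A\setminus X$ is exactly the quantity $D_\beta(X,A\setminus\{t\})-D_{e_t}(X,A\setminus\{t\})$ produced by Lemma~\ref{lemma:formal}, which is nonnegative precisely because $D_{e_t}\leq D_\beta$. The main obstacle I anticipate is bookkeeping: verifying that each of the three cases of the M\"obius difference for $\beta_t$ corresponds correctly to one of the three expressions $D_\beta$, $D_{e_t}$, or their difference, and that the factorization through Lemma~\ref{lemma:formal} covers the mixed case exactly. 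Once the case analysis is set up cleanly, the two hypotheses on $D_{e_t}$ match the three cases one-to-one and the proof closes.
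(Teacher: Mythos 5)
Your proposal is correct and follows essentially the same route as the paper: the dictionary $e_t(X)=\beta_t(X\cup\{t\})$, the reindexing identity $D_{e_t}(X,A)=D_{\beta_t}(X\cup\{t\},A\cup\{t\})$, and the use of Lemma~\ref{lemma:formal} with $c=t$ to handle the upper bound in one direction and the mixed case $t\in A\setminus X$ in the other. The three-case analysis for (A3) matches the paper's Cases 1--3 exactly.
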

\begin{proof}
Let us prove that (a) implies (b). Suppose that  $(\beta,S)$ can be extended by
$t$.  Let $(\beta_t,S\cup\{t\})$ be a witness pair that extends $(\beta,S)$.
For $Y\in\Fin(S)$, put $e_t(Y)={\beta_t}(Y\cup\{t\})$. Clearly,
$e_t(\emptyset)=t$.  For all $X,A\in\Fin(S)$ with $X\subseteq A$,
\begin{align*}
D_{e_t}(X,A)=\sum_{X\subseteq Y\subseteq A}(-1)^{|X|+|Y|}e_t(Y)=\\
=\sum_{X\subseteq Y\subseteq A}(-1)^{|X|+|Y|}{\beta_t}(Y\cup\{t\})
\end{align*}
A substitution $Z=Y\cup\{t\}$ yields
\begin{align*}
\sum_{X\subseteq Y\subseteq A}(-1)^{|X|+|Y|}{\beta_t}(Y\cup\{t\})=\\
=\sum_{X\cup\{t\}\subseteq Z\subseteq A\cup\{t\}}(-1)^{|X|+|Z\cup\{t\}|}{\beta_t}(Z)=\\
=\sum_{X\cup\{t\}\subseteq Z\subseteq A\cup\{t\}}(-1)^{|X\cup\{t\}|+|Z|}{\beta_t}(Z)=\\
=D_{\beta_t}(X\cup\{t\},A\cup\{t\})\geq 0,
\end{align*}
since $\beta_t$ is a witness mapping.
Therefore, $D_{e_t}(X,A)\geq 0$.
By Lemma \ref{lemma:formal},
$$
D_{\beta_t}(X,A)=D_{\beta_t}(X,A\cup\{t\})+D_{\beta_t}(X\cup\{t\},A\cup\{t\}).
$$
As $\beta_t$ is a witness mapping,  $D_{\beta_t}(X,A\cup\{t\})\geq 0$.
Thus,
$$
D_{e_t}(X,A)=D_{\beta_t}(X\cup\{t\},A\cup\{t\})\leq D_{\beta_t}(X,A).
$$

Let us prove that (b) implies (a). The mapping $\beta_t:\Fin(S\cup\{t\})\to E$ is
given by
$$
\beta_t(X)=
\begin{cases}
\beta(X) &\text{ for }t\not\in X,\\
e_t(X\setminus\{t\}) &\text{ for }t\in X.
\end{cases}
$$
Obviously, $\beta_t(\emptyset)\beta(\emptyset)=1$ and, for all $c\in S\cup\{t\}$, $\beta_t(\{c\})=c$.

It remains to prove that $D_{\beta_t}\geq 0$. Let $X,A\in\Fin(S\cup\{t\})$ be such that
$X\subseteq A$. 

\noindent(Case 1) $t\notin A$.

If $t\notin A$, then $t\notin X$ and $D_{\beta_t}(X,A)=D_\beta(X,A)\geq 0$.

\noindent(Case 2) $t\in X$.

If $t\in X$, then $t\in A$ and 
\begin{align*}
D_{\beta_t}(X,A)=\sum_{X\subseteq Z\subseteq A}(-1)^{|X|+|Z|}\beta_t(Z)=\\
=\sum_{X\subseteq Z\subseteq A}(-1)^{|X|+|Z|}e_t(Z\setminus\{t\})
\end{align*}
A substitution $Y=Z\setminus\{t\}$ yields
\begin{align*}
\sum_{X\subseteq Z\subseteq A}(-1)^{|X|+|Z|}e_t(Z\setminus\{t\})=\\
=\sum_{X\setminus\{t\}\subseteq Y\subseteq A\setminus\{t\}}(-1)^{|X|+|Y\cup\{t\}|}e_t(Y)=\\
=\sum_{X\setminus\{t\}\subseteq Y\subseteq A\setminus\{t\}}(-1)^{|X\setminus\{t\}|+|Y|}e_t(Y)=
D_{e_t}(X\setminus\{t\},A\setminus\{t\})\geq 0,
\end{align*}
by assumption.

\noindent(Case 3) $t\not\in X$ and $t\in A$.

Put $A_0=A\setminus\{t\}$ so that
$A=A_0\cup\{t\}$ and $D_{\beta_t}(X,A)=D_{\beta_t}(X,A_0\cup\{t\})$.
By Lemma \ref{lemma:formal},
$$
D_{\beta_t}(X,A_0)=D_{\beta_t}(X,A_0\cup\{t\})+D_{\beta_t}(X\cup\{t\},A_0\cup\{t\}),
$$
hence
$$
D_{\beta_t}(X,A_0\cup\{t\})=D_{\beta_t}(X,A_0)
	-D_{\beta_t}(X\cup\{t\},A_0\cup\{t\}).$$
Since $t\notin X$ and $t\notin A_0$, we see that 
$D_{\beta_t}(X\cup\{t\},A_0\cup\{t\})=D_{e_t}(X,A_0)$
by the proof of (Case 2).
Hence,
$$
D_{\beta_t}(X,A_0\cup\{t\})=D_{\beta_t}(X,A_0)
	-D_{\beta_t}(X\cup\{t\},A_0\cup\{t\})=
D_{\beta_t}(X,A_0)-D_{e_t}(X,A_0).
$$
As $D_{\beta_t}\geq D_{e_t}$, this implies that
$D_{\beta_t}(X,A_0\cup\{t\})\geq 0$. It remains to recall that
$D_{\beta_t}(X,A)=D_{\beta_t}(X,A_0\cup\{t\})$.
\end{proof}
\begin{proposition}\label{prop:zeroone}
Every witness pair $(\beta,S)$ can be extended by $0$ and $1$.
\end{proposition}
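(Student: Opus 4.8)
The plan is to invoke Theorem~\ref{thm:main} directly. To extend $(\beta,S)$ by an element $t\in E\setminus S$ it suffices, by that theorem, to exhibit a mapping $e_t:\Fin(S)\to E$ with $e_t(\emptyset)=t$ and $0\leq D_{e_t}\leq D_\beta$. For each of $t=1$ and $t=0$ I would write down an explicit candidate and check the two inequalities, which in both cases collapse to the defining conditions (A1) and (A3) already assumed for $\beta$.

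For $t=1$ I would take $e_1=\beta$ itself. By (A1), $e_1(\emptyset)=\beta(\emptyset)=1$, as required. Since $e_1$ is literally the same mapping as $\beta$, we have $D_{e_1}=D_\beta$, so the lower bound $0\leq D_{e_1}$ is precisely (A3), while the upper bound $D_{e_1}\leq D_\beta$ holds with equality. Thus condition (b) of Theorem~\ref{thm:main} is met and $(\beta,S)$ can be extended by $1$. This choice is the natural one: it matches part (e) of the earlier Proposition, which records that adjoining $1$ should force $\beta_1(X\cup\{1\})=\beta(X)$.

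For $t=0$ I would take $e_0$ to be the constant mapping sending every $X\in\Fin(S)$ to $0$. Then $e_0(\emptyset)=0=t$, and for every $(X,A)\in I(\Fin(S))$ each summand in the defining formula for $D_{e_0}(X,A)$ is $0$, whence $D_{e_0}\equiv 0$. The lower bound $0\leq D_{e_0}$ then holds trivially, and $D_{e_0}\leq D_\beta$ is again just (A3). Theorem~\ref{thm:main} therefore yields the extension by $0$. Again this agrees with part (d) of the earlier Proposition, which forces the witness value to vanish on any set containing $0$.

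In all honesty there is no real obstacle here: once Theorem~\ref{thm:main} is available, both extensions are witnessed by the most naive candidates, and every verification reduces to the axioms (A1) and (A3) imposed on $\beta$. The only point requiring a moment's care is the reading of the statement, since ``$(\beta,S)$ can be extended by $t$'' is defined only for $t\in E\setminus S$; the claim is thus understood under the implicit hypotheses $0\notin S$ and $1\notin S$, and if either element already lies in $S$ there is nothing to prove.
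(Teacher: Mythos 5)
Your proposal is correct and uses exactly the same candidates as the paper, which simply sets $e_0(X)=0$, $e_1(X)=\beta(X)$ and applies Theorem~\ref{thm:main}; your verification of the inequalities is just the spelled-out version of that one-line argument.
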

\begin{proof}
Put $e_0(X)=0$, $e_1(X)=\beta(X)$ and apply Theorem \ref{thm:main}.
\end{proof}
\begin{proposition}
Suppose $u\in S$, $u'\notin S$. Then $(\beta,S)$ can be extended
by $u'$.
\end{proposition}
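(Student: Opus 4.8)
The plan is to verify condition (b) of Theorem \ref{thm:main} by producing a suitable mapping $e_{u'}\colon\Fin(S)\to E$ and then invoking the theorem. Since $u'\notin S$ we have $u'\in E\setminus S$, so the theorem is applicable. The element $u'$ is the orthosupplement of $u=\beta(\{u\})$, so by an inclusion–exclusion heuristic I expect the ``part of $\beta(X)$ lying outside $u$'' to be the correct candidate. Concretely, I would set
\[
e_{u'}(X)=\beta(X)\ominus\beta(X\cup\{u\}).
\]
This is well-defined as an element of $E$: because $\beta$ is antitone we have $\beta(X\cup\{u\})\leq\beta(X)$, so $\beta(X)\ominus\beta(X\cup\{u\})$ exists and lies in the interval $[0,\beta(X)]\subseteq E$. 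Moreover $e_{u'}(\emptyset)=\beta(\emptyset)\ominus\beta(\{u\})=1\ominus u=u'$, as the theorem requires.

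The core of the argument is to compute $D_{e_{u'}}(X,A)$ for $(X,A)\in I(\Fin(S))$ and to check that $0\leq D_{e_{u'}}(X,A)\leq D_\beta(X,A)$. Expanding the definition of the M\"obius transform gives
\[
D_{e_{u'}}(X,A)=D_\beta(X,A)-\sum_{X\subseteq Z\subseteq A}(-1)^{|X|+|Z|}\beta(Z\cup\{u\}),
\]
so everything reduces to evaluating the second sum $\Sigma$, which I would do by a case analysis on how $u$ sits relative to $X$ and $A$. If $u\in X$, then $Z\cup\{u\}=Z$ throughout, so $\Sigma=D_\beta(X,A)$ and $D_{e_{u'}}(X,A)=0$. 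If $u\in A\setminus X$, splitting the index set according to whether $u\in Z$ and reindexing the $u\notin Z$ portion by $W=Z\cup\{u\}$ makes the two halves cancel, whence $\Sigma=0$ and $D_{e_{u'}}(X,A)=D_\beta(X,A)$. Both bounds then hold trivially, since $D_\beta(X,A)\geq 0$.

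The remaining and most delicate case is $u\notin A$ (hence $u\notin X$). Here every $Z$ in the sum omits $u$, so the substitution $W=Z\cup\{u\}$ turns $\Sigma$ into $D_\beta(X\cup\{u\},A\cup\{u\})$, once the sign bookkeeping from $|W|=|Z|+1$ and $|X\cup\{u\}|=|X|+1$ is reconciled. Because $u\in S\setminus A$, Lemma \ref{lemma:formal} applies and yields
\[
D_\beta(X,A)=D_\beta(X,A\cup\{u\})+D_\beta(X\cup\{u\},A\cup\{u\}),
\]
so that $D_{e_{u'}}(X,A)=D_\beta(X,A)-D_\beta(X\cup\{u\},A\cup\{u\})=D_\beta(X,A\cup\{u\})$. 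Since $\beta$ is a witness mapping, this value is nonnegative, giving the lower bound; and since $D_\beta(X\cup\{u\},A\cup\{u\})\geq 0$ as well, the difference $D_\beta(X,A)-D_\beta(X\cup\{u\},A\cup\{u\})$ is at most $D_\beta(X,A)$, giving the upper bound. Collecting the three cases establishes $0\leq D_{e_{u'}}\leq D_\beta$ everywhere, and the conclusion follows from Theorem \ref{thm:main}. I expect the only real obstacle to be the sign and index bookkeeping in the case $u\notin A$ — matching the reindexing against the exact form of Lemma \ref{lemma:formal}; the other two cases are purely formal.
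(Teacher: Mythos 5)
Your proof is correct and follows essentially the same route as the paper: the same candidate $e_{u'}(X)=\beta(X)\ominus\beta(X\cup\{u\})$, the same verification via condition (b) of Theorem \ref{thm:main}, and the same identification $D_{e_{u'}}(X,A)=D_\beta(X,A)-D_\beta(X\cup\{u\},A\cup\{u\})=D_\beta(X,A\cup\{u\})$ in the case $u\notin A$. The only cosmetic difference is that for $u\in A\setminus X$ you cancel the sum $\Sigma$ directly by pairing $Z$ with $Z\cup\{u\}$, whereas the paper reduces that case to the other two via Lemma \ref{lemma:formal}.
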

\begin{proof}
We shall apply Theorem \ref{thm:main}.
Put
$$
e_{u'}(X)=\beta(X)-\beta(X\cup\{u\}).
$$
We see that
$$
e_{u'}(\emptyset)=\beta(\emptyset)-\beta(\{u\})=1-u=u'.
$$
Let $X,A\in\Fin(S)$ be such that $X\subseteq A$. It remains to
prove that $0\leq D_{e_t}(X,A)\leq D_\beta(X,A)$.

\noindent (Case 1) $u\in X$. 

Obviously, $\beta(Z)=\beta(Z\cup\{u\})$, so 
$D_{e_{u'}}(X,A)=0$. 

\noindent (Case 2) $u\not\in A$

Let us rewrite
\begin{align*}
D_{e_{u'}}(X,A)=&\sum_{X\subseteq Z\subseteq A}(-1)^{|X|+|Z|}\beta(Z)-\beta(Z\cup\{u\})=\\
=&\sum_{X\subseteq Z\subseteq A}(-1)^{|X|+|Z|}\beta(Z)-
\sum_{X\subseteq Z\subseteq A}(-1)^{|X|+|Z|}\beta(Z\cup\{u\})=\\
=&D_\beta(X,A)-
\sum_{X\cup\{u\}\subseteq Y\subseteq A\cup\{u\}}(-1)^{|X|+|Y\setminus\{u\}|}\beta(Y)
\end{align*}
A substitution $Y=Z\cup\{u\}$ now yields
\begin{align*}
~&D_\beta(X,A)-
\sum_{X\cup\{u\}\subseteq Y\subseteq A\cup\{u\}}(-1)^{|X|+|Y\setminus\{u\}|}\beta(Y)=\\
=&D_\beta(X,A)-
\sum_{X\cup\{u\}\subseteq Y\subseteq A\cup\{u\}}(-1)^{|X\cup\{u\}|+|Y|}\beta(Y)=\\
=&D_\beta(X,A)-D_\beta(X\cup\{u\},A\cup\{u\})=D_\beta(X,A\cup\{u\}).
\end{align*}
Therefore, $0\leq D_{e_{u'}}(X,A)\leq D_\beta(X,A)$.

\noindent(Case 3) $u\in A$ and $u\not\in X$. 

Let us put $A_0=A\setminus \{u\}$.
By Lemma \ref{lemma:formal},
\begin{align*}
D_{e_{u'}}(X,A)=D_{e_{u'}}(X,A_0\cup\{u\})=
D_{e_{u'}}(X,A_0)-D_{e_{u'}}(X\cup\{u\},A_0\cup\{u\}).
\end{align*}
By (Case 1),
$$
D_{e_{u'}}(X\cup\{u\},A_0\cup\{u\})=0,
$$
so
$$
D_{e_{u'}}(X,A)=D_{e_{u'}}(X,A_0),
$$
and, since $u\not\in A_0$, (Case 3) reduces to (Case 2).
\end{proof}
\begin{proposition}
Suppose that $u\in\ran(\beta)$. Then $(\beta,S)$ can be extended by $u$.
\end{proposition}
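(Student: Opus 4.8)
The plan is to apply Theorem \ref{thm:main}: it suffices to exhibit a map $e_u\colon\Fin(S)\to E$ with $e_u(\emptyset)=u$ and $0\le D_{e_u}\le D_\beta$. Since $u\in\ran(\beta)$, I fix some $W\in\Fin(S)$ with $\beta(W)=u$, and take the ``shifted'' candidate
$$
e_u(X)=\beta(X\cup W).
$$
Then $e_u$ maps into $E$ (because $\beta$ does) and $e_u(\emptyset)=\beta(W)=u$, so the whole problem reduces to controlling the transform $D_{e_u}$. (We may assume $u\in E\setminus S$, otherwise there is nothing to prove; note that then $|W|\neq 1$, since $\beta(\{c\})=c\in S$ by (A2).)

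The first step is to compute $D_{e_u}(X,A)$ for $(X,A)\in I(\Fin(S))$ by expanding the definition and carrying out the substitution $Z\mapsto Z\cup W$. The subtlety is that this substitution is not injective: the elements of $W$ lying in $A\setminus X$ are free to be in or out of $Z$, yet are absorbed by the union with $W$, producing repeated terms carrying opposite signs. Writing $Z=X\cup T$ with $T\subseteq A\setminus X$ and splitting $T$ according to $P:=W\cap(A\setminus X)$, the sum factors and the factor contributed by $P$ is the alternating sum $\sum_{T_1\subseteq P}(-1)^{|T_1|}=(1-1)^{|P|}$, which vanishes unless $P=\emptyset$. This yields the key formula
$$
D_{e_u}(X,A)=
\begin{cases}
0, & W\cap(A\setminus X)\neq\emptyset,\\
D_\beta(X\cup W,A\cup W), & W\cap(A\setminus X)=\emptyset.
\end{cases}
$$

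It then remains to verify the two bounds in each case. When $W\cap(A\setminus X)\neq\emptyset$ there is nothing to check, since $0\le 0\le D_\beta(X,A)$ by (A3). When $W\cap(A\setminus X)=\emptyset$ we have $D_{e_u}(X,A)=D_\beta(X\cup W,A\cup W)\ge 0$, again by (A3), so the lower bound is immediate. For the upper bound I would first record a monotonicity fact: if $N\subseteq S$ is disjoint from $A$, then $D_\beta(X\cup N,A\cup N)\le D_\beta(X,A)$. This follows by adding the elements of $N$ one at a time, using Lemma \ref{lemma:formal} with $c\in S\setminus A$, which gives $D_\beta(X\cup\{c\},A\cup\{c\})=D_\beta(X,A)-D_\beta(X,A\cup\{c\})\le D_\beta(X,A)$ since $D_\beta(X,A\cup\{c\})\ge 0$ by (A3). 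In the relevant case $W\cap(A\setminus X)=\emptyset$ one has $X\cup W=X\cup N$ and $A\cup W=A\cup N$ with $N:=W\setminus A$ disjoint from $A$, so the monotonicity fact gives $D_{e_u}(X,A)=D_\beta(X\cup N,A\cup N)\le D_\beta(X,A)$, as needed.

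The main obstacle is the cancellation computation of the second paragraph: confirming that the terms indexed by subsets of $P=W\cap(A\setminus X)$ cancel, so that $D_{e_u}$ either vanishes or collapses to a shifted value of $D_\beta$. Once that identity is in hand, both bounds follow mechanically from axiom (A3) and a short induction on Lemma \ref{lemma:formal}, and Theorem \ref{thm:main} delivers the extension.
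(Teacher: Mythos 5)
Your proof is correct and follows essentially the same route as the paper's: the same candidate $e_u(X)=\beta(X\cup W)$, the same dichotomy (your condition $W\cap(A\setminus X)=\emptyset$ is exactly the paper's $X\cap U=A\cap U$), and the same collapse of $D_{e_u}(X,A)$ to either $0$ or $D_\beta(X\cup W,A\cup W)$. The only minor difference is that you obtain the upper bound $D_\beta(X\cup N,A\cup N)\le D_\beta(X,A)$ by iterating Lemma \ref{lemma:formal}, whereas the paper cites Lemma 3 of \cite{Jen:CiIEA}; both are valid, and yours is slightly more self-contained.
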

\begin{proof}
Since $u\in\ran(\beta))$, there is $U\in\Fin(S)$ such that $\beta(U)=u$. 
For all $Z\in\Fin(S)$, we put
$$
e_u(Z)=\beta(Z\cup U).
$$
Clearly, $e_u(\emptyset)=\beta(U)=u$.
Further,
$$
D_{e_u}(X,A)=\sum_{X\subseteq Z\subseteq A}(-1)^{|X|+|Z|}\beta(Z\cup U)
$$
Note that, for every $Z$, there is a unique pair of sets $(Z_1,Z_2)$ such
that
\begin{align*}
&Z=Z_1\cup Z_2\\
&X\setminus U\subseteq Z_1\subseteq A\setminus U\\
&X\cap U\subseteq Z_2\subseteq A\cap U.
\end{align*}
In fact, $Z_1=Z\setminus U$ and $Z_2=Z\cap U$.
Therefore, we may express $D_{e_u}(X,A)$ as a double sum:
$$
D_{e_u}(X,A)=
	\sum_{X\setminus U\subseteq Z_1\subseteq A\setminus U}
	\sum_{X\cap U\subseteq Z_2\subseteq A\cap U}
		(-1)^{|Z_1|+|Z_2|+|X|}\beta(Z_1\cup Z_2\cup U).
$$
As $Z_2\subseteq U$, we obtain
\begin{equation}\label{eq:doublesum}
D_{e_u}(X,A)=
	\sum_{X\setminus U\subseteq Z_1\subseteq A\setminus U}
	\sum_{X\cap U\subseteq Z_2\subseteq A\cap U}
		(-1)^{|Z_1|+|Z_2|+|X|}\beta(Z_1\cup U).
\end{equation}

(Case 1) Suppose that $X\cap U=A\cap U$. Then 
$Z_2=X\cap U$ and the inner sum collapses to a single summand, so
$$
D_{e_u}(X,A)=
	\sum_{X\setminus U\subseteq Z_1\subseteq A\setminus U}
		(-1)^{|Z_1|+|X\cap U|+|X|}\beta(Z_1\cup U).
$$
We can substitute $Y:=Z_1\cup U$, so that the sum can be written
as
$$
D_{e_u}(X,A)=
	\sum_{X\cup U\subseteq Y\subseteq A\cup U}
		(-1)^{|Y\setminus U|+|X\cap U|+|X|}\beta(Y).
$$
By a simple reasoning about parity, it is easy to check that
$(-1)^{|X\cup U|+|U|}=(-1)^{|X\cap U|+|X|}$, hence
\begin{align*}
D_{e_u}(X,A)=&
	\sum_{X\cup U\subseteq Y\subseteq A\cup U}
		(-1)^{|Y\setminus U|+|X\cap U|+|X|}\beta(Y)=\\
	=&\sum_{X\cup U\subseteq Y\subseteq A\cup U}
		(-1)^{|Y\setminus U|+|X\cup U|+|U|}\beta(Y)=\\
	=&\sum_{X\cup U\subseteq Y\subseteq A\cup U}
		(-1)^{|Y|+|X\cup U|}\beta(Y)=D_{\beta}(X\cup U,A\cup U).
\end{align*}
By the Lemma 3 of \cite{Jen:CiIEA}, for any $C\in\Fin(S)$ with $C\cap A=\emptyset$,
$$
\bigoplus_{Y\subseteq C}D_\beta(X\cup Y,A\cup C)=D_\beta(X,A).
$$
This implies that $D_\beta(X\cup C,A\cup C)\leq D_\beta(X,A)$.
Putting $C:=U\setminus A$ yields
$$
D_\beta(X\cup(U\setminus A),A\cup(U\setminus A))\leq D_\beta(X,A).
$$
Since $X\cap U=A\cap U$, $X\cup(U\setminus A)=X\cup U$ and
$$
D_\beta(X\cup(U\setminus A),A\cup(U\setminus A))=
D_\beta(X\cup U,A\cup U).
$$
Hence,
$$
D_\beta(X\cup U,A\cup U)
\leq D_\beta(X,A).
$$
Summarizing, we see that
$$
0\leq D_\beta(X\cup U,A\cup U)=D_{e_u}(X,A)\leq D_\beta(X,A).
$$

(Case 2) Suppose that $X\cap U\neq A\cap U$.
Let us focus onto the inner sum of~(\ref{eq:doublesum}).
Pick $c\in (A\cap U)\setminus(X\cap U)$.
Consider the systems of sets
\begin{align*}
\mathbb H_c:=&\{H:
(X\cap U)\cup\{c\}\subseteq H\subseteq A\cup U\}\\
\mathbb H'_c:=&\{H:
(X\cap U)\subseteq H\subseteq (A\cup U)\setminus\{c\}\}.
\end{align*}
Note that, for all $X\cap U\subseteq Z_2\subseteq A\cap U$, 
$c\in Z_2$ iff $Z_2\in\mathbb H_c$ and $c\notin Z_2$
iff $Z_2\in\mathbb H'_c$. Moreover,
$Z_2\mapsto Z_2\cup\{c\}$ is a bijection from
$\mathbb H'_c$ onto $\mathbb H_c$.
Therefore, we may write
\begin{align*}
~&\sum_{X\cap U\subseteq Z_2\subseteq A\cap U}
	(-1)^{|Z_1|+|Z_2|+|X|}\beta(Z_1\cup U)=\\
=&\sum_{Z_2\in\mathbb H'_c}
	(-1)^{|Z_1|+|Z_2|+|X|}\beta(Z_1\cup U)+
	(-1)^{|Z_1|+|Z_2\cup\{c\}|+|X|}\beta(Z_1\cup U).
\end{align*}
However, it is obvious that
$$
(-1)^{|Z_1|+|Z_2|+|X|}\beta(Z_1\cup U)+
(-1)^{|Z_1|+|Z_2\cup\{c\}|+|X|}\beta(Z_1\cup U)=0.
$$
Thus, $D_{e_u}(X,A)=0$ for $X\cap U\neq A\cap U$.
\end{proof}
\begin{corollary}
Let $(\beta,S)$ be a maximal witness pair.
Then $0,1\in S$, $S$ is closed with respect to 
$x\mapsto x'$, and $\beta$ maps $\Fin(S)$ onto $S$.
\end{corollary}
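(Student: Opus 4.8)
The plan is to derive all three conclusions from maximality by contraposition, using the three extension results established just above. First I would record the basic consequence of maximality: if $(\beta,S)$ is a maximal witness pair and $t\in E\setminus S$, then $(\beta,S)$ cannot be extended by $t$. Indeed, were it extendable, an extension would furnish a witness pair $(\beta_t,S\cup\{t\})$ with $(\beta_t,S\cup\{t\})\sqsupseteq(\beta,S)$ and $S\cup\{t\}\supsetneq S$, contradicting maximality. Thus no single element of $E\setminus S$ can be adjoined, and this single observation drives every part of the argument.

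For the claim $0,1\in S$, I would argue by contradiction. If $0\notin S$ then $0\in E\setminus S$, and Proposition \ref{prop:zeroone} shows that $(\beta,S)$ can be extended by $0$, contradicting the observation above; hence $0\in S$, and the identical argument with $1$ in place of $0$ gives $1\in S$. For closure under $x\mapsto x'$, I would fix an arbitrary $u\in S$ and suppose $u'\notin S$. Then $u'\in E\setminus S$, and the proposition asserting that a witness pair with $u\in S$ and $u'\notin S$ can be extended by $u'$ again contradicts maximality; therefore $u'\in S$, so $S$ is closed under complementation.

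Finally, for the assertion that $\beta$ maps $\Fin(S)$ onto $S$, I would prove the equality $\ran(\beta)=S$ by two inclusions. The inclusion $S\subseteq\ran(\beta)$ is immediate from condition (A2), since $\beta(\{c\})=c$ for every $c\in S$. For the reverse inclusion, I would take $u\in\ran(\beta)$ and suppose $u\notin S$; then $u\in E\setminus S$, and the proposition stating that any $u\in\ran(\beta)$ gives rise to an extension contradicts maximality, so $u\in S$. Hence $\ran(\beta)\subseteq S$, and the two inclusions together yield $\ran(\beta)=S$.

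The argument is a direct application of the three preceding propositions, so I expect no genuine obstacle. The only points requiring care are the correct reading of ``can be extended by $t$'' as presupposing $t\in E\setminus S$, so that each proposition is invoked exactly when its target element lies outside $S$, and the fact that adjoining a single new element already produces a strict extension and is therefore enough to contradict maximality. The ``onto'' conclusion is the one most easily misread, since it is an equality of sets: beyond the extension argument it also needs the easy inclusion $S\subseteq\ran(\beta)$ supplied by (A2).
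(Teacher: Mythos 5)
Your proof is correct and follows exactly the route the paper intends: the paper's own proof is simply ``By the Propositions 2, 3, and 4,'' and your write-up supplies precisely the maximality-by-contradiction argument (plus the easy inclusion $S\subseteq\ran(\beta)$ from (A2)) that this citation leaves implicit.
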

\begin{proof}
By the Propositions 2,3, and 4.
\end{proof}

\section{Extensions in standard effect algebras}
\begin{proposition}
Let $u\in\effects$, suppose that $u$ commutes with
every element of $\ran(\beta)$. Then $(\beta,S)$ can be extended
by $u$.
\end{proposition}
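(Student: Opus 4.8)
The plan is to apply Theorem \ref{thm:main}, exhibiting a suitable mapping $e_u$. Guided by the product witness mapping $\Pi$ of Example 2, I would define $e_u:\Fin(S)\to\effects$ by the operator product
$$
e_u(X)=u\,\beta(X).
$$
Since $\beta(X)\in\ran(\beta)$, the operator $u$ commutes with $\beta(X)$, so $u\,\beta(X)$ is a product of two commuting positive self-adjoint operators and is therefore again positive and self-adjoint. Moreover $(I-u)\beta(X)\geq 0$ gives $u\,\beta(X)\leq\beta(X)\leq I$, so $e_u(X)\in[0,I]_{\mathcal B_{sa}(\mathbb H)}=\effects$; thus $e_u$ really does map into $E$. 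Finally $e_u(\emptyset)=u\,\beta(\emptyset)=u\cdot I=u$, which is the first requirement of condition (b).

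Next I would compute $D_{e_u}$. Left multiplication by the fixed operator $u$ is additive on $\mathcal B_{sa}(\mathbb H)$, so it commutes with the finite signed sum defining the M\"obius transform:
$$
D_{e_u}(X,A)=\sum_{X\subseteq Z\subseteq A}(-1)^{|X|+|Z|}u\,\beta(Z)
	=u\sum_{X\subseteq Z\subseteq A}(-1)^{|X|+|Z|}\beta(Z)=u\,D_\beta(X,A).
$$
Because $D_\beta(X,A)$ is an integer combination of the operators $\beta(Z)$, each of which commutes with $u$, the operator $u$ also commutes with $D_\beta(X,A)$. This reduces the whole problem to comparing $u\,D_\beta(X,A)$ with $D_\beta(X,A)$ for the two commuting positive operators $u$ and $d:=D_\beta(X,A)$.

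Finally I would verify $0\leq D_{e_u}(X,A)\leq D_\beta(X,A)$ using two standard facts about commuting self-adjoint operators. First, the product of two commuting positive operators is positive; since $u\geq 0$ and $d=D_\beta(X,A)\geq 0$ (as $\beta$ is a witness mapping) commute, we get $ud\geq 0$, that is $D_{e_u}(X,A)\geq 0$. Second, $d-ud=(I-u)d$ is again a product of the commuting positive operators $I-u\geq 0$ (as $u\leq I$) and $d\geq 0$, hence $(I-u)d\geq 0$, i.e. $ud\leq d$, that is $D_{e_u}(X,A)\leq D_\beta(X,A)$. With both inequalities in hand, Theorem \ref{thm:main} produces the desired extension $\beta_u$.

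The only genuinely delicate point is the operator-theoretic input: one must argue within $\mathcal B_{sa}(\mathbb H)$ that products of commuting positive operators are positive (most cleanly via the functional calculus on the commutative algebra that $u$, $\beta(X)$ and hence $D_\beta(X,A)$ generate), and keep careful track of the fact that the multiplication used to define $e_u$ is the operator product. It is precisely the commutativity hypothesis that makes this product positive, bounded by $\beta(X)$, and therefore a legitimate element of $\effects$; without it the construction would leave the effect algebra entirely.
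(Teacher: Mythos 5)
Your proposal is correct and is exactly the paper's argument: the paper also sets $e_u(X)=u\,\beta(X)$, notes $e_u(\emptyset)=u$, and asserts $0\leq u\,D_\beta(X,A)=D_{e_u}(X,A)\leq D_\beta(X,A)$ before invoking Theorem \ref{thm:main}. You merely spell out the operator-theoretic details (linearity of left multiplication through the M\"obius sum, positivity of products of commuting positive operators) that the paper leaves implicit under ``Clearly''.
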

\begin{proof}
Put $e_u(X)=u.\beta(X)$. Clearly, $e_u(\emptyset)=u$ and
$$
0\leq u.D_{\beta}(X,A)=D_{e_u}(X,A)\leq D_{\beta}(X,A).
$$
\end{proof}

\begin{proposition}\label{prop:convex}
Let $u,w\in\effects$, suppose that $(\beta,S)$ can be extended by
both $u$ and $w$. Let $v$ be a convex combination of $u,w$. Then
$(\beta,S)$ can be extended by $v$.
\end{proposition}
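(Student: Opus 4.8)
The plan is to reduce the statement to Theorem \ref{thm:main} by a convexity argument, using two structural features: the Möbius transform $f\mapsto D_f$ is linear in $f$, and $\effects=[0,I]_{\mathcal B_{sa}(\mathbb H)}$ is a convex subset of the enveloping group $\mathcal B_{sa}(\mathbb H)$.

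First I would write $v=\lambda u+(1-\lambda)w$ for some $\lambda\in[0,1]$. Applying Theorem \ref{thm:main} to the hypotheses that $(\beta,S)$ can be extended by $u$ and by $w$, I would obtain mappings $e_u,e_w:\Fin(S)\to E$ with $e_u(\emptyset)=u$, $e_w(\emptyset)=w$ and $0\le D_{e_u}\le D_\beta$, $0\le D_{e_w}\le D_\beta$. I would then define $e_v(X)=\lambda e_u(X)+(1-\lambda)e_w(X)$ for $X\in\Fin(S)$. Since each of $e_u(X),e_w(X)$ lies in the convex set $\effects=E$, so does $e_v(X)$, so $e_v$ is a legitimate mapping $\Fin(S)\to E$; moreover $e_v(\emptyset)=\lambda u+(1-\lambda)w=v$. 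By the linearity of the transform one has $D_{e_v}=\lambda D_{e_u}+(1-\lambda)D_{e_w}$, and since $\lambda,1-\lambda\ge0$ one obtains, for every $(X,A)\in I(\Fin(S))$,
\[
0\le \lambda D_{e_u}(X,A)+(1-\lambda)D_{e_w}(X,A)=D_{e_v}(X,A)\le \lambda D_\beta(X,A)+(1-\lambda)D_\beta(X,A)=D_\beta(X,A).
\]
Hence $0\le D_{e_v}\le D_\beta$, and a second application of Theorem \ref{thm:main} delivers the desired extension of $(\beta,S)$ by $v$.

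I expect no serious obstacle here; the single point that needs genuine verification is that $e_v$ takes its values in $E$ and not merely in the surrounding group $\mathcal B_{sa}(\mathbb H)$, which is precisely where the convexity of $\effects$ (a property special to the standard effect algebra) enters. The linearity of $D_f$ in $f$ is immediate from its defining sum, and the nonnegativity of the coefficients $\lambda$ and $1-\lambda$ makes the two-sided bound on $D_{e_v}$ automatic.
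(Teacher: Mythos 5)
Your proposal is correct and follows essentially the same route as the paper: write $v=\theta u+(1-\theta)w$, set $e_v=\theta e_u+(1-\theta)e_w$, and use the linearity of $f\mapsto D_f$ together with the nonnegativity of the coefficients to get $0\leq D_{e_v}\leq D_\beta$, then invoke Theorem \ref{thm:main}. Your explicit remark that $e_v$ lands in $E$ because $\effects$ is convex is a small point the paper leaves implicit, but it is the same argument.
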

\begin{proof}
Write $v=\theta u+(1-\theta) w$, where $\theta\in[0,1]_{\mathbb R}$.
Let $e_u$ and $e_w$ be extension mappings for $u$ and $w$, respectively.
Put
$$
e_v=\theta e_u+(1-\theta)e_w .
$$
Clearly, $e_v(\emptyset)=v$ and
$$
D_{e_v}(X,A)=\theta D_{e_u}(X,A)+(1-\theta) D_{e_w}(X,A).
$$
Since $e_u$ and $e_w$ are extension mappings,
\begin{align}
\label{theta}&0\leq D_{e_u}(X,A) \leq D_\beta(X,A)\\
\label{1mtheta}&0\leq D_{e_w}(X,A) \leq D_\beta(X,A)
\end{align}
Multiplying (\ref{theta}) by $\theta$, 
(\ref{1mtheta}) by $(1-\theta)$ and then summing up the
inequalities gives us
$$
0\leq\theta D_{e_u}(X,A)+(1-\theta) D_{e_w}(X,A)\leq D_\beta(X,A).
$$
\end{proof}

\end{document}